\documentclass[10pt]{article}

\usepackage[T1]{fontenc}
\usepackage[margin=1in]{geometry}
\usepackage{lmodern}
\usepackage{amsmath,amssymb,amsthm}
\usepackage{microtype}
\usepackage{enumitem}
\usepackage{listings}
\usepackage{url}
\usepackage[hidelinks]{hyperref}

\title{\Large Breaking the $2^n$ Barrier for Counting Linear Extensions with a Short Elementary Algorithm}
\author{Keigo Oka\\\texttt{ogiekako@gmail.com}}
\date{August 2026}

\newtheorem{theorem}{Theorem}
\newtheorem{lemma}{Lemma}

\newcommand{\cO}{\mathcal{O}}
\newcommand{\Ostar}{\cO^{*}}

\newcommand{\LE}{\mathcal{L}}
\newcommand{\Ideals}{\mathcal{I}}

\hypersetup{
  pdftitle={Breaking the 2-to-the-n Barrier for Counting Linear Extensions with a Short Elementary Algorithm},
  pdfauthor={Keigo Oka}
}

\begin{document}
\emergencystretch=2em

\maketitle

\begin{abstract}
A linear extension of a finite partially ordered set is a total ordering that respects the partial order. We give a deterministic exact algorithm that counts the linear extensions of an arbitrary $n$-element poset in time $\Ostar(1.89^n)$, where $\Ostar(\cdot)$ suppresses polynomial factors. This breaks the $2^n$ barrier for the general problem and resolves a question explicitly posed by Koivisto at Dagstuhl 2013. The proof refines an argument of Kozma for two-dimensional posets. A chain partition handles the case in which the poset is sufficiently far from an antichain. Otherwise, fix a maximum antichain (a largest set of pairwise incomparable elements). For each of its elements that has a comparable element above it outside the antichain, we record only which such element appears first. A decoding lemma enumerates the resulting patterns from their multiplicities. Once a pattern is fixed, each antichain element has a release condition and at most one deadline, so the dynamic program stores only the number of released elements in each deadline class. A stars-and-bars count bounds the total number of states.
\end{abstract}

\section{Introduction}

Let $P=(V,\le_P)$ be a finite poset, let $n:=|V|$, and write $e(P)$ for the number of linear extensions of $P$. Exact computation of $e(P)$ is \#P-complete~\cite{BrightwellWinkler}. The standard dynamic program over order ideals has worst-case running time $O(n2^n)$~\cite{DeLoof}. At Dagstuhl 2013, Koivisto explicitly asked whether the general problem admits an exact $O(c^n)$-time algorithm for some constant $c<2$~\cite[p.~67, Open Problem~6.13]{Dagstuhl}. Subsequent work studied exact algorithms exploiting sparsity and structural parameters~\cite{KangasSparse,EibenTreewidth}. Dittmer and Pak showed that counting linear extensions remains \#P-complete even for two-dimensional posets~\cite{DittmerPak}. Kozma subsequently broke the $2^n$ barrier for two-dimensional posets and left the corresponding question for arbitrary posets open~\cite{Kozma}.

The relation to Kozma's argument is direct. In his large-matching case, $m$ vertex-disjoint comparable pairs reduce the number of possible order ideals to at most
\[
2^{n-2m}3^m=2^n(3/4)^m;
\]
Kozma attributes this observation to earlier work, including Kratsch and Kratsch~\cite{KratschKratsch} (see also~\cite[Sec.~2.1]{Kozma}). In the small-matching case, the unmatched elements form an antichain. Kozma groups these elements by their full comparability neighborhoods and notes that, for a general poset, the number of classes can be as large as $\min\{2^{n-|A|},|A|\}$. His two-dimensional assumption is then used to bound the number of classes by $2(n-|A|)$~\cite[Sec.~2.1]{Kozma}.

Here the first branch is replaced by the slightly stronger chain-partition bound of Section~2. The new part is the treatment of the large-antichain case. Rather than grouping an antichain element by its full neighborhood, we record only the first element above it outside the antichain. After these first elements are fixed, every antichain element has only a release condition and one deadline. Lemma~\ref{lem:profile} shows that the first-element patterns can be recovered from their multiplicities, and Lemma~\ref{lem:deadline} justifies a dynamic program that keeps only the number of released elements in each deadline class. The resulting $\Ostar(1.89^n)$ algorithm resolves Koivisto's question; to our knowledge, it is the first $\Ostar(c^n)$ algorithm with $c<2$ for arbitrary posets.

Throughout, $\Ostar(f(n))$ means $O(f(n)n^{O(1)})$, and $x<_P y$ means $x\le_P y$ and $x\ne y$.

\begin{theorem}\label{thm:main}
For every $n$-element poset $P$, the number $e(P)$ can be computed deterministically in time $\Ostar(1.89^n)$.
\end{theorem}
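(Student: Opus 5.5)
We follow the two-branch structure from the introduction, with a threshold on the width $w$ (the size of a maximum antichain $A$). Set $w=\alpha n$ for a constant $\alpha$ to be chosen at the end.

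\textbf{Chain-partition branch.} If $w\le \alpha n$, Dilworth's theorem gives a partition of $V$ into $w$ chains $C_1,\dots,C_w$, computable in polynomial time by bipartite matching. An order ideal meets each chain in a prefix, so it is determined by the prefix lengths. Hence the number of ideals is at most $\prod_i(|C_i|+1)$. By AM--GM this is at most $(n/w+1)^w$, i.e.\ $\exp\bigl(n\,\alpha\ln(1/\alpha+1)\bigr)$. The standard dynamic program over ideals (a linear extension is a maximal chain of ideals) runs in time polynomial in $n$ times the number of ideals. For moderate $\alpha$ this gives $\Ostar(c_1(\alpha)^n)$ with $c_1<2$, which beats Kozma's matching bound $2^n(3/4)^m$.

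\textbf{Large-antichain branch.} If $w>\alpha n$, fix $A$ and put $B:=V\setminus A$, so $|B|<(1-\alpha)n$. For $a\in A$ let $U(a):=\{b\in B: a<_P b\}$. In a linear extension, $a$ must precede every element of $U(a)$, so its only deadline is the first element of $U(a)$ to appear. Its release condition is that all its predecessors in $B$ have appeared; predecessors in $A$ do not exist, since $A$ is an antichain. We guess a pattern $f:A\to B\cup\{\bot\}$ sending each $a$ to the first element of $U(a)$ in the extension, with $\bot$ when $U(a)=\emptyset$. Each extension then falls into exactly one pattern, and we sum over patterns.

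For a fixed pattern we count extensions by a dynamic program. The state consists of:
\begin{itemize}[label={}]
\item the set $S\subseteq B$ of elements of $B$ already placed;
\item for each deadline class $f^{-1}(b)$, the number of its elements already placed.
\end{itemize}
Elements within a class with the same release status are interchangeable, so counts suffice; this is what Lemma~\ref{lem:deadline} certifies. Placing $b\in B$ is allowed only if:
\begin{itemize}[label={}]
\item every element of $f^{-1}(b)$ is placed;
\item every element of $U(a)$ that is placed with $a\in f^{-1}(b)$ is consistent with $b$ being first, meaning no element of $U(a)$ precedes $b$.
\end{itemize}
The second condition is checkable from $S$ and $f$.

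To avoid paying $(|B|+1)^{|A|}$ for the patterns, we instead enumerate pattern multiplicities. These are the vectors $(|f^{-1}(b)|)_{b}$, and Lemma~\ref{lem:profile} decodes each to the actual patterns, or to the class structure needed. The number of multiplicity vectors is $\binom{|A|+|B|}{|B|}$ by stars and bars. Within a state, the count vector over classes ranges over at most $\prod_b(|f^{-1}(b)|+1)$ values. A stars-and-bars/AM--GM estimate bounds the total over $S$, the classes and the counts by roughly $2^{|B|}\binom{|A|+|B|}{|B|}$ times the product of class counts. The aim is to show that this is $\Ostar(c_2(\alpha)^n)$ with $c_2<2$ whenever $|B|\le(1-\alpha)n$ and $\alpha$ exceeds some constant.

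\textbf{Balancing and the main obstacle.} Finally, choose $\alpha$ so that $\max\{c_1(\alpha),c_2(\alpha)\}\le 1.89$, which is a numerical check of two entropy-type exponents. The main obstacle is the state count in the antichain branch. The sets $S$ contribute up to $2^{|B|}$, and these must be restricted to those compatible with the release conditions, possibly again via ideals of $B$. The joint sum over patterns and count vectors must be bounded by one stars-and-bars expression rather than a product. I would first try to bound the combined data (the multiplicity vector together with the count vector) as a single composition of at most $|A|$ into $2|B|+1$ parts. That gives $\binom{|A|+2|B|}{2|B|}\cdot 2^{|B|}$, and I would then verify numerically that this is below $1.89^n$ in the regime where the chain branch fails.
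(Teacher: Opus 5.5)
Your architecture matches the paper's: a Dilworth branch against first-upper-element patterns, profile decoding, a count-based deadline DP, and stars and bars. However, both of your quantitative estimates are too weak for the constant $1.89$, so your closing numerical check would fail.

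\textbf{The antichain branch (the decisive gap).} Write $k:=|B|=n-w=\beta n$. Your combined bound $2^{k}\binom{|A|+2k}{2k}=2^k\binom{n+k}{2k}$ grows like $(2^{\beta+(1+\beta)H(2\beta/(1+\beta))})^n$, where $H$ is the binary entropy. This is below $1.89^n$ only for $\beta\lesssim0.112$, and it reaches $2^n$ near $\beta\approx0.13$. No chain bound covers $k\ge0.112n$ at rate $1.89$: even the integer-optimal bound $2^n(3/4)^k$ gives about $1.937^n$ there. Balancing your two estimates yields only roughly $1.95^n$.

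The missing idea is to shrink the set of deadline classes.
\begin{enumerate}
\item Every deadline lies in $B^+:=\bigcup_{a}U(a)$. Moreover, $B$ is the disjoint union of $B^+$ and $B^-:=\{b\in B:b<_P a\text{ for some }a\in A\}$: maximality of $A$ gives the covering, and $A$ being an antichain gives disjointness.
\item Reversing the order preserves $e(P)$ and swaps $B^+$ with $B^-$, so you may assume $s:=|B^+|\le k/2$.
\item Then elements of $B^-$ contribute only a membership bit, and there are just $s+1$ deadline classes. The stars-and-bars sum becomes $2^k\binom{|A|+2s+1}{2s+1}\le2^k\binom{n+1}{k+1}$ for $k<n/2$, i.e.\ $\Ostar(2^k\binom nk)$.
\item This is below $1.887^n$ for $k<0.198n$.
\end{enumerate}
Without the reversal, $s$ can equal $k$ and you are back at your bound.

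\textbf{The chain branch.} Your AM--GM estimate treats chain lengths as real numbers and loses too much. At $w=0.802n$, $(1+n/w)^w$ is about $1.914^n$. It stays below $1.89^n$ only for $w\lesssim0.754n$, and there even the corrected $2^k\binom nk$ exceeds $2^n$; the best balance would be about $1.91^n$. Use integrality instead: $\ell+1\le2^\ell(3/4)^{\ell-1}$ for every integer $\ell\ge1$. Equality holds at $\ell=1,2$, and the right side grows by a factor $3/2$ per step, which dominates $(\ell+2)/(\ell+1)$. Hence $\prod_i(\ell_i+1)\le2^n(3/4)^{n-w}$, which is below $1.8893^n$ once $n-w\ge0.198n$.

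\textbf{A smaller point.} Summing over profiles bounds the number of patterns only because each profile decodes to \emph{at most one} consistent pattern. That is the uniqueness part of Lemma~\ref{lem:profile}, and your phrase ``decodes each to the actual patterns'' should say so explicitly.
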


\section{Few ideals from a chain partition}

This is the chain-partition version of the large-matching bound above, not a new algorithmic ingredient.

A subset $I\subseteq V$ is an order ideal if $x\in I$ and $y\le_P x$ imply $y\in I$. Let $\Ideals(P)$ be the family of order ideals. For an ideal $I$, let $\max_P(I)$ be its set of maximal elements. The recurrence
\[
F(\varnothing)=1,
\qquad
F(I)=\sum_{v\in\max_P(I)}F(I\setminus\{v\})
\]
computes $e(P)$ in $\Ostar(|\Ideals(P)|)$ time: the last element of a linear extension of the subposet induced by $I$ is precisely an element of $\max_P(I)$. Order ideals are in bijection with antichains, by taking maximal elements in one direction and downward closure in the other.

The width of $P$ is the maximum size of an antichain. Put
\[
r:=\operatorname{width}(P),
\qquad
k:=n-r.
\]
By Dilworth's theorem~\cite{Dilworth}, $V$ can be partitioned into $r$ chains; a maximum antichain and such a minimum chain partition can be found in polynomial time via bipartite matching. Let the chain lengths be $\ell_1,\ldots,\ell_r$. An antichain contains at most one element of each chain, so
\[
|\Ideals(P)|\le \prod_{i=1}^r(\ell_i+1).
\]
For every integer $\ell\ge1$,
\begin{equation}\label{eq:chain-one}
\ell+1\le 2^\ell\left(\frac34\right)^{\ell-1}.
\end{equation}
There is equality for $\ell=1,2$. If~\eqref{eq:chain-one} holds for $\ell$, then its right-hand side is multiplied by $3/2$ when $\ell$ is increased by one, and
\[
\frac32(\ell+1)\ge \ell+2.
\]
Multiplying~\eqref{eq:chain-one} over the chains and using $\sum_i\ell_i=n$ and $\sum_i(\ell_i-1)=n-r=k$ gives
\begin{equation}\label{eq:ideal-bound}
|\Ideals(P)|\le 2^n\left(\frac34\right)^k.
\end{equation}

\section{First-upper-element patterns}

Fix a maximum antichain $B\subseteq V$ and put $X:=V\setminus B$; we call $X$ the core. Since $|B|=r=n-k$, we have $|X|=k$. For $b\in B$ define
\[
D_b:=\{x\in X:x<_P b\},
\qquad
U_b:=\{x\in X:b<_P x\}.
\]
Thus $D_b$ and $U_b$ are the elements of the core below and above $b$, respectively. Let
\[
X^-:=\bigcup_{b\in B}D_b,
\qquad
X^+:=\bigcup_{b\in B}U_b.
\]
Because a maximum antichain is maximal, every $x\in X$ is comparable with some element of $B$, so $X=X^-\cup X^+$. The union is disjoint: if $x\in X^-\cap X^+$, then $b_1<_P x<_P b_2$ for some $b_1,b_2\in B$, contradicting that $B$ is an antichain. Reversing the partial order preserves $e(P)$ and swaps $X^-$ with $X^+$; hence we may assume
\begin{equation}\label{eq:s-bound}
s:=|X^+|\le k/2.
\end{equation}

For each $b\in B$ with $U_b\ne\varnothing$, choose $u_b\in U_b$ and add the relations
\begin{equation}\label{eq:pattern-relations}
u_b<y
\qquad
(y\in U_b\setminus\{u_b\}).
\end{equation}
Call $R=(u_b)$ a pattern. It is consistent if the original relations together with~\eqref{eq:pattern-relations} are acyclic; for a consistent pattern, let $P_R$ be their transitive closure. Write $\LE(P)$ for the set of linear extensions of $P$.

Every linear extension of $P$ has a unique first element in each nonempty $U_b$ and hence determines one pattern. Conversely, a linear extension of $P_R$ has $u_b$ first in $U_b$ for each relevant $b$. Therefore
\begin{equation}\label{eq:partition}
\LE(P)=\mathop{\dot\bigcup}_{R\text{ consistent}}\LE(P_R).
\end{equation}
We enumerate consistent patterns through their multiplicity vectors.

\begin{lemma}[Profile decoding]\label{lem:profile}
Let $S_1,\ldots,S_t$ be nonempty subsets of a finite set $Y$. A choice vector $g=(g_1,\ldots,g_t)$, with $g_i\in S_i$, is realizable if some total order of $Y$ makes $g_i$ the first element of $S_i$ for every $i$. Given nonnegative integers $(m_y)_{y\in Y}$ with $\sum_{y\in Y}m_y=t$, one can in polynomial time recover the unique realizable choice vector satisfying
\[
|\{i:g_i=y\}|=m_y
\qquad
(y\in Y),
\]
or certify that none exists.
\end{lemma}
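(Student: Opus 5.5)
The plan is to decode the choice vector greedily, peeling off the element that must come first in a realizing order. Suppose $g$ is realizable via a total order $\sigma$ of $Y$. Let $y^*$ be the $\sigma$-first element of $Y$ with $m_{y^*}>0$. Every set $S_i$ containing $y^*$ has first element $g_i\le_\sigma y^*$, and $g_i$ has positive multiplicity. Since $y^*$ is the earliest element of positive multiplicity, this forces $g_i=y^*$. So every set containing $y^*$ chooses $y^*$. Conversely, the sets choosing $y^*$ contain $y^*$. Hence $m_{y^*}=|\{i:y^*\in S_i\}|$.

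This suggests the following algorithm. Maintain the family $\mathcal{S}$ of still-unassigned sets, initially all of $S_1,\ldots,S_t$, and the current multiplicities. At each step, look for an element $y$ with $m_y>0$ and $m_y=|\{S_i\in\mathcal{S}:y\in S_i\}|$. Assign $g_i:=y$ for all such $S_i$, remove these sets, and set $m_y:=0$. Continue until $\mathcal{S}$ is empty. If at some step no such $y$ exists while $\mathcal{S}\neq\varnothing$, report that no realizable vector exists. Also reject if the $m$-values do not match at the end, for example if some $m_y>0$ remains. Each step is polynomial and there are at most $|Y|$ steps.

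The main obstacle is that several elements may satisfy the counting criterion at once, so correctness and uniqueness must not depend on which one is picked. I would argue this with an invariant. Fix any realizable $g$ with multiplicities $m$. If $y$ satisfies $m_y=|\{i:y\in S_i\}|>0$, then every set containing $y$ must have $g_i=y$. The reason is that exactly $m_y$ indices have $g_i=y$, all of them lie among the sets containing $y$, and that family also has size $m_y$. So assigning $y$ to all sets containing it agrees with $g$, whichever valid $y$ is chosen.

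Next, the residual instance (sets not containing $y$, with $Y\setminus\{y\}$ and $m_y$ removed) is realized by the restriction of $\sigma$. The restricted $g$ is still the first-element vector, because deleting $y$ from the order does not change the first element of any set that does not contain $y$. Moreover, the peeling argument above shows that the residual instance again has a valid element whenever sets remain. By induction, whenever some realizable $g$ exists, the algorithm outputs exactly $g$. This gives uniqueness as well: any two realizable vectors with the same profile both equal the algorithm's output.

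Finally, if the algorithm completes, its output is realizable. Order $Y$ by the order in which elements were removed, placing elements never selected at the end. Each $S_i$ was assigned at the step where its earliest-removed member $y$ was processed. At that step all earlier-removed elements had already been processed, and none of them lay in $S_i$, since otherwise $S_i$ would have been assigned then. Hence $y$ is the first element of $S_i$ in this order. Together with the multiplicity check, this certifies the output; otherwise the algorithm correctly reports that no realizable vector exists.
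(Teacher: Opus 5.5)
Your proposal is correct and follows essentially the paper's argument: repeatedly select elements with $m_y=d(y)>0$, observe that every surviving set containing such a $y$ is forced to choose it (so any realizable vector is reproduced, which gives uniqueness and completeness), and certify a successful run by ordering $Y$ in removal order. The only difference is that you remove one valid element per step instead of the paper's whole block $Z$. As a result, a conflict the paper rejects immediately (a surviving set meeting $Z$ twice) shows up in your version only later as a stall, which is harmless because your soundness argument shows that any completed run yields a realizable vector.
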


\begin{proof}
Maintain the family $\mathcal F$ of unassigned sets and the residual multiplicities $m_y$. For $y\in Y$, let
\[
d_{\mathcal F}(y):=|\{S\in\mathcal F:y\in S\}|,
\qquad
Z:=\{y:m_y=d_{\mathcal F}(y)>0\}.
\]
Suppose a realizing order exists and $\mathcal F\ne\varnothing$. Let $y$ be the earliest element in that order that belongs to any set of $\mathcal F$. Every surviving set containing $y$ must choose $y$, so $m_y=d_{\mathcal F}(y)$ and $Z\ne\varnothing$. More generally, $m_y=d_{\mathcal F}(y)>0$ forces every surviving set containing $y$ to choose $y$. Thus we reject if $Z=\varnothing$ or if a surviving set meets $Z$ in at least two elements. Otherwise every set meeting $Z$ is assigned to its unique element of $Z$ and deleted, and we set $m_y:=0$ for $y\in Z$. Because no surviving set meets $Z$ twice, this round deletes exactly $\sum_{y\in Z}m_y$ sets. Hence $\sum_y m_y-|\mathcal F|$ is invariant; it is initially zero, so when $\mathcal F=\varnothing$ all residual multiplicities are zero. Repeat until $\mathcal F=\varnothing$.

The assignments are forced, so there is at most one solution. If the procedure succeeds with blocks $Z_1,Z_2,\ldots$, order the blocks in this order and put the never-selected elements last. A set deleted in round $j$ contains no earlier block and contains exactly one element of $Z_j$; its assigned element is therefore first in that set. This proves soundness as well as completeness. Each successful round deletes at least one set.
\end{proof}

Apply Lemma~\ref{lem:profile} to the family $(U_b)_{U_b\ne\varnothing}$ on the ground set $X^+$. A consistent pattern is realizable: $P_R$ has a linear extension, and its restriction to $X^+$ places each chosen $u_b$ before the other elements of $U_b$. Hence a consistent pattern is uniquely determined by its profile
\[
M_u:=|\{b\in B:u_b=u\}|
\qquad
(u\in X^+).
\]
Let $t:=|\{b\in B:U_b\ne\varnothing\}|$. Enumerate the nonnegative vectors $(M_u)_{u\in X^+}$ with total mass $t$, decode each by Lemma~\ref{lem:profile}, and retain it exactly when the original relations together with~\eqref{eq:pattern-relations} are acyclic. The work per candidate is polynomial.

\section{Deadline dynamic program}

Fix a consistent pattern. Give each $b\in B$ the deadline
\[
\delta(b):=
\begin{cases}
u_b,&U_b\ne\varnothing,\\
\infty,&U_b=\varnothing.
\end{cases}
\]
The release condition of $b$ is $D_b$: it becomes available once every element of $D_b$ has appeared. No element of $X^+$ lies below an element of $B$, since $b_1<_P x<_P b_2$ would make two elements of $B$ comparable. The added relations~\eqref{eq:pattern-relations} lie inside $X^+$, so they create no new predecessor of an element of $B$. Thus $D_b$ remains the complete set of predecessors of $b$ inside $X$. If $\delta(b)=u_b$, then $b<_P u_b$ and~\eqref{eq:pattern-relations} imply $b<u_b<y$ for every $y\in U_b\setminus\{u_b\}$. After release, the only remaining condition on $b$ is therefore to occur before its deadline, if it has one.

Let $Q_R:=P_R[X]$ be the subposet induced by $X$. For an order ideal $I$ of $Q_R$ and a class $c\in X^+\cup\{\infty\}$, define
\[
A_c(I):=|\{b\in B:\delta(b)=c,\ D_b\subseteq I\}|.
\]
A state is $(I,q)$, where $q_c$ is the number of released but unplaced elements of class $c$. Initially $I=\varnothing$ and $q_c=A_c(\varnothing)$. There are two transitions:
\begin{enumerate}[label=(\roman*),leftmargin=2.2em]
\item if $q_c>0$, place one class-$c$ element, replace $q_c$ by $q_c-1$, and give the transition weight $q_c$;
\item if $x\in X\setminus I$ and $I':=I\cup\{x\}$ is an order ideal of $Q_R$, place $x$; if $x\in X^+$ require $q_x=0$, and then replace each $q_c$ by
\[
q_c+A_c(I')-A_c(I).
\]
\end{enumerate}
The terminal state is $(X,0)$.

\begin{lemma}[Deadline dynamic program]\label{lem:deadline}
The total weight of all paths from the initial state to $(X,0)$ is $e(P_R)$.
\end{lemma}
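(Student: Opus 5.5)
Our plan is to build a weight-preserving correspondence between linear extensions of $P_R$ and DP paths. Each linear extension maps to a ``compressed'' word in which every element of $B$ is replaced by its class label. For a path, the weight counts the ways to assign actual elements of $B$ to the class-labelled steps. Summing over paths should then give $e(P_R)$.

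\emph{Step 1: characterize $\LE(P_R)$.} We first show that a total order $L$ of $V$ is a linear extension of $P_R$ if and only if three conditions hold:
\begin{enumerate}[label=(\alph*),leftmargin=2.2em]
\item its restriction to $X$ is a linear extension of $Q_R$;
\item every $b\in B$ comes after all of $D_b$;
\item every $b$ with $\delta(b)\ne\infty$ comes before $u_b$.
\end{enumerate}
Necessity is clear. For sufficiency we use the observations made just before the lemma. The generating relations of $P_R$ fall into four groups:
\begin{itemize}[leftmargin=2.2em]
\item relations inside $X$, handled by (a);
\item relations inside $B$, of which there are none;
\item relations $x<b$ with $x\in X$, which means $x\in D_b$, handled by (b);
\item relations $b<y$ with $y\in U_b$.
\end{itemize}
In the last group, either $y=u_b$, handled by (c), or $u_b<y$ is a relation of $Q_R$, so (a) and (c) give $b<u_b<y$. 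Since $P_R$ is the transitive closure of these relations, every relation is respected. We also need that the order ideals of $Q_R$ are closed under the original relations restricted to $X$; this holds because $Q_R$ is the induced subposet of the transitive closure.

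\emph{Step 2: the projection.} Given $L\in\LE(P_R)$, scan it left to right and produce a path. An element $x\in X$ becomes a transition of type (ii); the prefix of $X$ read so far is an ideal $I$ of $Q_R$ by (a). An element $b\in B$ becomes a type-(i) transition in class $\delta(b)$. By induction, $q_c$ equals the number of class-$c$ elements of $B$ that have been released (i.e.\ $D_b\subseteq I$) but not yet placed. Condition (b) guarantees that each $b$ is already released when it is scanned, so $q_{\delta(b)}>0$ at that moment. Condition (c) guarantees $q_x=0$ when $x\in X^+$ is placed, because every $b$ with $\delta(b)=x$ has already appeared. At the end $I=X$, so every $b$ is released; since all of $B$ has been placed, $q=0$. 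Hence the image is a valid path to $(X,0)$.

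\emph{Step 3: fibers.} Fix a valid path. We count the $L$ mapping to it. At a type-(i) step in class $c$, exactly $q_c$ candidates are released and unplaced, and any of them may be chosen. Every resulting $L$ satisfies (a)--(c): (a) and (b) hold by construction, and (c) follows from the requirement $q_x=0$ when $x=u_b$ is placed. For (c) we need that $b$ was already released at that time, and indeed $D_b<b<u_b$ forces $D_b\subseteq I$ before $u_b$ enters; this is the one point requiring care. The terminal condition $q=0$ ensures every $b$ is placed exactly once. The fiber size is therefore the product of the $q_c$ values before decrement, which is exactly the path weight. Distinct paths have disjoint fibers, since the projection is a function.

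The main obstacle is the bookkeeping in Step 3. We must confirm that the counters $q_c$ coincide with the set sizes and that no element of $B$ can be left unreleased at a deadline. For the latter we would rely on consistency of $R$ together with transitivity in $Q_R$ ($D_b\subseteq I$ whenever $u_b\in I'$). The remaining steps are routine inductions.
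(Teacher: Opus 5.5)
Your proposal is correct and takes essentially the same route as the paper: a weight-preserving correspondence between linear extensions of $P_R$ and DP paths, where each path of weight $w$ has a fiber of $w$ extensions. The key point in both is that $d<_P b<_P u_b$ forces every element with deadline $u_b$ to be released before $u_b$ enters, so $q_{u_b}=0$ is exactly the ``none left behind'' condition. Your explicit characterization of $\LE(P_R)$ by (a)--(c) and the projection/fiber count spell out what the paper states more tersely. One small point: in your last paragraph, consistency of $R$ is not needed for the release-before-deadline fact, since transitivity of $P$ alone gives it.
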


\begin{proof}
A prefix of a linear extension of $P_R$ determines the core ideal $I$ and the numbers $q_c$ of released but unplaced elements. An element of $B$ may be placed exactly after its release, which gives transition~(i). Now suppose the next core element is a finite deadline $x\in X^+$. If $\delta(b)=x$ and $d\in D_b$, then $d<_P b<_P x$. Since $I\cup\{x\}$ is an order ideal and $d\ne x$, we have $d\in I$. Thus every element with deadline $x$ has already been released, and $q_x=0$ is exactly the condition that none is left behind. In particular, adding $x$ releases no new element whose deadline is $x$.

After $b\in B$ has been released, all of its lower constraints are satisfied, and its upper constraints are implied by its deadline. Released elements with the same deadline therefore have identical remaining constraints. Their identities matter only when one of them is chosen; the transition weight $q_c$ counts that choice. This gives a weight-preserving bijection between paths of the dynamic program and linear extensions of $P_R$.
\end{proof}

By~\eqref{eq:partition}, summing this dynamic program over the consistent patterns gives $e(P)$.

\section{State count and proof of the theorem}

Let
\[
N:=|B|=n-k,
\qquad
M_\infty:=|\{b\in B:U_b=\varnothing\}|.
\]
For one pattern, the deadline dynamic program has at most
\begin{equation}\label{eq:state-one}
2^{k-s}(M_\infty+1)\prod_{u\in X^+}(M_u+2)
\end{equation}
states. To see this, first choose $I\cap X^-$, for which we use the crude bound $2^{k-s}$. For $u\in X^+$, if $u\notin I$ then $0\le q_u\le M_u$; if $u\in I$, then $q_u=0$. Hence the pair consisting of the membership of $u$ in $I$ and the value of $q_u$ has at most $M_u+2$ possibilities. Finally, $0\le q_\infty\le M_\infty$. These data determine $(I,q)$ because $X=X^-\mathbin{\dot\cup}X^+$.

Different consistent patterns have different profiles by Lemma~\ref{lem:profile}. Write $X^+=\{u_1,\ldots,u_s\}$, let $m_i$ be the mass assigned to $u_i$, and let $m_0$ denote the number of elements with no finite deadline. Enlarging the sum from the profiles that occur to all weak compositions $m_0+\cdots+m_s=N$, equation~\eqref{eq:state-one} gives
\begin{align}
\#\{\text{states over all patterns}\}
&\le 2^{k-s}\sum_{m_0+\cdots+m_s=N}(m_0+1)\prod_{i=1}^s(m_i+2)\\
&\le 2^k\sum_{m_0+\cdots+m_s=N}\prod_{i=0}^s(m_i+1)\\
&=2^k\binom{N+2s+1}{2s+1}.
\label{eq:state-sum}
\end{align}
For the last equality, distribute $N$ identical balls among $2s+2$ boxes and pair the boxes. Conditional on a pair receiving $m_i$ balls in total, there are $m_i+1$ ways to split them between its two boxes.

Assume $k<n/2$. From~\eqref{eq:s-bound},
\[
N+2s+1\le n+1,
\qquad
2s+1\le k+1\le\frac{n+1}{2}.
\]
The binomial coefficients in the lower argument are increasing up to $(n+1)/2$, so
\begin{equation}\label{eq:binom-bound}
\binom{N+2s+1}{2s+1}
\le \binom{n+1}{2s+1}
\le \binom{n+1}{k+1}
=\frac{n+1}{k+1}\binom{n}{k}.
\end{equation}
The total running time of the dynamic programs is therefore $\Ostar\!\left(2^k\binom{n}{k}\right)$.

The pattern enumeration is no larger. If $s=0$, all $U_b$ are empty and there is one pattern. If $s\ge1$, then $t\le|B|=n-k$ and $s\le k/2$, so $t+s-1\le n$. Hence there are
\[
\binom{t+s-1}{s-1}
\le \binom{n}{s-1}
\le \binom{n}{k}
\]
candidate profiles; the last inequality uses $s-1\le k<n/2$. Each candidate is decoded and checked in polynomial time. Each transition of the dynamic program places one element, so its state graph is acyclic. All counts are at most $n!$ and have $O(n\log n)$ bits. These operations contribute only polynomial factors.

\begin{proof}[Proof of Theorem~\ref{thm:main}]
Compute a maximum antichain and let $k=n-\operatorname{width}(P)$. Put
\[
\alpha:=\frac{99}{500}=0.198.
\]
If $k\ge \alpha n$, use the order-ideal dynamic program. By~\eqref{eq:ideal-bound},
\[
\Ostar\!\left(2^n(3/4)^k\right)
\le \Ostar\!\left(\left[2(3/4)^\alpha\right]^n\right),
\]
and $2(3/4)^{99/500}<1.8893<1.89$.

Suppose now that $k<\alpha n$ and use the pattern/deadline algorithm. The bounds above give running time $\Ostar\!\left(2^k\binom{n}{k}\right)$. Set
\[
x:=\frac{99}{802}.
\]
From the binomial theorem,
\[
(1+2x)^n
=\sum_{j=0}^n 2^j\binom{n}{j}x^j
\ge 2^k\binom{n}{k}x^k.
\]
Since $0<x<1$ and $k<\alpha n$,
\[
2^k\binom{n}{k}
\le \left[(1+2x)x^{-\alpha}\right]^n
=\left[\frac{500}{401}\left(\frac{802}{99}\right)^{99/500}\right]^n
<1.887^n.
\]
Thus both branches run in $\Ostar(1.89^n)$ time.
\end{proof}

\section*{Generative-AI disclosure}

Generative-AI systems played a substantive role in the development of this manuscript. The core mathematical ideas underlying the new part of the algorithm and proof were discovered by Claude Opus 5 (Anthropic) during AI-assisted mathematical exploration. In the terminology of the paper, these core ideas include the first-upper-element pattern representation, multiplicity-profile decoding, the deadline dynamic program, and the associated state-counting strategy in Sections~3--5.

The research prompt supplied to Claude Opus 5 was generated by ChatGPT 5.6 Sol (OpenAI). When asking ChatGPT 5.6 Sol to construct that prompt, the author explicitly instructed it to take as a reference the publicly released prompt used for OpenAI's \emph{A Proof of the Cycle Double Cover Conjecture}~\cite{OpenAICDCprompt}. Thus, the prompt-generation stage itself was AI-assisted, in addition to the mathematical exploration performed by Claude Opus 5.

Generative-AI tools were also used extensively for literature search, drafting, and verification assistance during the subsequent development of the manuscript.

All AI-assisted work described above relied exclusively on public-facing AI services and publicly accessible materials. No internal tools, nonpublic datasets, confidential documents, private repositories, or other nonpublic resources of any institution with which the author is affiliated were used in developing the ideas, prompts, proofs, verification code, or manuscript.

The author has independently checked the mathematical statements and proofs and takes full responsibility for the correctness, attribution, and presentation of the manuscript.

\appendix
\section{Executable verification}

The proof above does not depend on computation. The program below checks the profile/deadline construction against independent computations and, in addition, reports how often the tests actually exercise the pattern machinery. A poset is represented by its strict transitive closure using bit masks.

For $n\le6$, the program enumerates every distinct naturally labeled poset, meaning that every relation $i<_P j$ satisfies $i<j$. Every finite poset has such a labeling, obtained from any linear extension. For each enumerated poset the program forces the profile/deadline construction, regardless of which branch Theorem~\ref{thm:main} would select, and performs three checks:
\begin{enumerate}
\item it compares direct enumeration of all permutations, the ordinary ideal dynamic program, and the profile/deadline algorithm;
\item it enumerates every raw choice $u_b\in U_b$, keeps the acyclic choices, and checks that this set is exactly the set recovered from multiplicity profiles by Lemma~\ref{lem:profile};
\item for every consistent pattern $R$, it compares the deadline dynamic program with an independent ideal-DP computation of $e(P_R)$.
\end{enumerate}
For $n=7$, permutation enumeration is omitted, but the other two structural checks and the ideal-DP comparison are performed on all $96{,}428$ naturally labeled posets.

The exhaustive tests on small posets do not by themselves exercise the new machinery very deeply: among the $5{,}231$ posets with $n\le6$, only six have more than one consistent pattern after the maximum-antichain choice and the duality step. We therefore include three further audited test families. First, $3{,}000$ random posets on 7--10 vertices are tested with the same raw-pattern and per-pattern checks enabled. Second, a targeted generator constructs a large antichain $B$, sets $X^-$ below $B$, and puts a small set $X^+$ above overlapping subsets of $B$. Each element of $X^+$ has a distinct guard in $B$, which keeps $B$ maximum, and $|X^-|\ge|X^+|$, so the duality step does not discard the intended upper side. These instances are designed to produce overlapping upper sets, several realizable first-element patterns, and repeated finite deadlines. Third, the same generator is run with $|X^+|$ allowed to reach 5, which yields instances whose pattern sums have as many as seventeen terms.

The output reports four coverage statistics. \texttt{multi} is the number of tested posets having at least two consistent patterns; \texttt{max\_patterns} is the largest number of consistent patterns for one poset; \texttt{max\_s} is the largest resolved-side size $s=|X^+|$; and \texttt{shared\_deadline} counts posets having a consistent pattern in which at least two elements of $B$ share a finite deadline.

The reported run used CPython 3.13.5. Omitting machine-dependent timings, the command
\begin{center}
\texttt{python3 verify\_linear\_extensions.py --cross-audit}
\end{center}
produced
\begin{lstlisting}
n=1: 1 naturally labeled posets: OK; multi=0, max_patterns=1, max_s=0, shared_deadline=0
n=2: 2 naturally labeled posets: OK; multi=0, max_patterns=1, max_s=0, shared_deadline=0
n=3: 7 naturally labeled posets: OK; multi=0, max_patterns=1, max_s=0, shared_deadline=0
n=4: 40 naturally labeled posets: OK; multi=0, max_patterns=1, max_s=1, shared_deadline=1
n=5: 357 naturally labeled posets: OK; multi=0, max_patterns=1, max_s=1, shared_deadline=43
n=6: 4824 naturally labeled posets: OK; multi=6, max_patterns=2, max_s=2, shared_deadline=1287
exhaustive total: 5231 posets: OK; multi=6, max_patterns=2, max_s=2, shared_deadline=1331
n=7: 96428 naturally labeled posets, with structural audit: OK;
     multi=1924, max_patterns=2, max_s=2, shared_deadline=35778
random: 3000 posets on 7..10 vertices, with structural audit: OK; seed=20260815;
        multi=275, max_patterns=4, max_s=4, shared_deadline=1158
targeted: 1500 large-antichain posets, structural audit: OK; seed=20261815;
          multi=1288, max_patterns=6, max_s=3, shared_deadline=1419
targeted-wide: 250 large-antichain posets, structural audit: OK; seed=20262815;
          multi=217, max_patterns=17, max_s=4, shared_deadline=236
\end{lstlisting}
As an additional run, $20{,}000$ random posets with seed 20260816 also passed the structural audit; among them \texttt{multi=1913}, \texttt{max\_patterns=5}, \texttt{max\_s=4}, and \texttt{shared\_deadline=7918}. These computations are intended as guards against implementation and bookkeeping errors, not as substitutes for the proof.

\subsection*{Verification program}

The program is included verbatim as an ancillary file of this submission (\path{anc/verify_linear_extensions.py}). The listing below is reproduced for reading; copying indented code out of a PDF is unreliable, so the ancillary file should be used to run it.

\begin{lstlisting}[language=Python]
#!/usr/bin/env python3
"""End-to-end checks for the profile/deadline algorithm.

Standard library only. A poset is stored by its strict transitive closure:
row[x] is a bit mask of the elements strictly above x.
"""
from functools import lru_cache
from itertools import permutations, product
from random import Random
import argparse
import time


def closure(rows, n):
    rows = list(rows)
    for k in range(n):
        bk = 1 << k
        rk = rows[k]
        for i in range(n):
            if rows[i] & bk:
                rows[i] |= rk
    if any(rows[i] & (1 << i) for i in range(n)):
        return None
    return tuple(rows)


def reverse_poset(rows, n):
    out = [0] * n
    for x in range(n):
        bits = rows[x]
        while bits:
            b = bits & -bits
            y = b.bit_length() - 1
            out[y] |= 1 << x
            bits -= b
    return tuple(out)


def predecessors(rows, n):
    pred = [0] * n
    for x in range(n):
        bits = rows[x]
        while bits:
            b = bits & -bits
            y = b.bit_length() - 1
            pred[y] |= 1 << x
            bits -= b
    return tuple(pred)


def naturally_labeled_posets(n):
    """All distinct transitive closures using only relations i<j."""
    pairs = [(i, j) for i in range(n) for j in range(i + 1, n)]
    seen = set()
    for mask in range(1 << len(pairs)):
        rows = [0] * n
        for e, (i, j) in enumerate(pairs):
            if mask >> e & 1:
                rows[i] |= 1 << j
        # Natural order is topological, so one backward pass closes transitively.
        for i in range(n - 1, -1, -1):
            bits = rows[i]
            while bits:
                b = bits & -bits
                j = b.bit_length() - 1
                rows[i] |= rows[j]
                bits -= b
        p = tuple(rows)
        if p not in seen:
            seen.add(p)
            yield p


def brute_force_count(rows, n):
    pred = predecessors(rows, n)
    ans = 0
    for p in permutations(range(n)):
        seen = 0
        ok = True
        for v in p:
            if pred[v] & ~seen:
                ok = False
                break
            seen |= 1 << v
        ans += ok
    return ans


def ideal_dp_count(rows, n):
    pred = predecessors(rows, n)
    dp = [0] * (1 << n)
    dp[0] = 1
    for mask in range(1 << n):
        z = dp[mask]
        if not z:
            continue
        for v in range(n):
            bv = 1 << v
            if not (mask & bv) and not (pred[v] & ~mask):
                dp[mask | bv] += z
    return dp[-1]


def maximum_antichain(rows, n):
    best = 0
    best_size = -1
    for mask in range(1 << n):
        if mask.bit_count() <= best_size:
            continue
        xs = [i for i in range(n) if mask >> i & 1]
        ok = True
        for a in range(len(xs)):
            x = xs[a]
            for y in xs[a + 1:]:
                if (rows[x] >> y & 1) or (rows[y] >> x & 1):
                    ok = False
                    break
            if not ok:
                break
        if ok:
            best, best_size = mask, len(xs)
    return best


def weak_compositions(total, parts):
    if parts == 0:
        if total == 0:
            yield ()
        return
    if parts == 1:
        yield (total,)
        return
    for x in range(total + 1):
        for tail in weak_compositions(total - x, parts - 1):
            yield (x,) + tail


def decode_profile(sets, multiplicities, s):
    """Lemma 1. Sets are bit masks on {0,...,s-1}."""
    m = list(multiplicities)
    F = set(range(len(sets)))
    choice = [None] * len(sets)
    if sum(m) != len(F):
        return None
    while F:
        d = [0] * s
        for i in F:
            bits = sets[i]
            while bits:
                b = bits & -bits
                d[b.bit_length() - 1] += 1
                bits -= b
        Z = 0
        for y in range(s):
            if m[y] == d[y] and m[y] > 0:
                Z |= 1 << y
        if not Z:
            return None
        removed = []
        for i in F:
            hit = sets[i] & Z
            if hit.bit_count() >= 2:
                return None
            if hit:
                y = (hit & -hit).bit_length() - 1
                choice[i] = y
                removed.append(i)
        # Exactly sum_{y in Z} m_y sets were removed.
        for y in range(s):
            if Z >> y & 1:
                m[y] = 0
        for i in removed:
            F.remove(i)
    if any(m):
        return None
    return choice


def add_pattern(rows, n, relevant, U_masks, xplus, choice):
    pos_to_vertex = list(xplus)
    out = list(rows)
    for i, b in enumerate(relevant):
        u = pos_to_vertex[choice[i]]
        out[u] |= U_masks[b] & ~(1 << u)
    return closure(out, n)


def deadline_count(rows, n, audit=False, stats=None):
    """Run the profile/deadline construction, forced on this poset."""
    Bmask = maximum_antichain(rows, n)
    Xmask = ((1 << n) - 1) ^ Bmask

    def split(p):
        D, U = {}, {}
        xm, xp = 0, 0
        pred = predecessors(p, n)
        for b in range(n):
            if Bmask >> b & 1:
                D[b] = pred[b] & Xmask
                U[b] = p[b] & Xmask
                xm |= D[b]
                xp |= U[b]
        return D, U, xm, xp

    D, U, xminus, xplus_mask = split(rows)
    if xplus_mask.bit_count() > xminus.bit_count():
        rows = reverse_poset(rows, n)
        D, U, xminus, xplus_mask = split(rows)

    X = [x for x in range(n) if Xmask >> x & 1]
    xplus = [x for x in X if xplus_mask >> x & 1]
    s = len(xplus)
    xpos = {x: i for i, x in enumerate(xplus)}
    relevant = [b for b in range(n) if Bmask >> b & 1 and U[b]]
    sets = []
    for b in relevant:
        sm = 0
        bits = U[b]
        while bits:
            z = bits & -bits
            x = z.bit_length() - 1
            sm |= 1 << xpos[x]
            bits -= z
        sets.append(sm)

    profiles = [()] if s == 0 else weak_compositions(len(relevant), s)
    total = 0
    generated_patterns = set()
    has_shared_deadline = False
    for profile in profiles:
        choice = [] if s == 0 else decode_profile(sets, profile, s)
        if choice is None:
            continue
        PR = add_pattern(rows, n, relevant, U, xplus, choice)
        if PR is None:
            continue
        pattern_key = tuple(choice)
        if audit and pattern_key in generated_patterns:
            raise AssertionError("duplicate decoded pattern")
        generated_patterns.add(pattern_key)
        if profile and max(profile) >= 2:
            has_shared_deadline = True

        cls_of_b = {}
        for i, b in enumerate(relevant):
            u = xplus[choice[i]]
            cls_of_b[b] = xpos[u]
        inf = s
        jobs = []
        for b in range(n):
            if Bmask >> b & 1:
                jobs.append((D[b], cls_of_b.get(b, inf)))

        predR = predecessors(PR, n)
        pred_core = {x: predR[x] & Xmask for x in X}

        @lru_cache(None)
        def born(I):
            a = [0] * (s + 1)
            for db, c in jobs:
                if not (db & ~I):
                    a[c] += 1
            return tuple(a)

        @lru_cache(None)
        def solve(I, q):
            if I == Xmask and not any(q):
                return 1
            ans = 0
            # Place a released antichain element.
            for c, qc in enumerate(q):
                if qc:
                    qq = list(q)
                    qq[c] -= 1
                    ans += qc * solve(I, tuple(qq))
            # Place a core element.
            oldborn = born(I)
            for x in X:
                bx = 1 << x
                if I & bx or (pred_core[x] & ~I):
                    continue
                if x in xpos and q[xpos[x]]:
                    continue
                I2 = I | bx
                newborn = born(I2)
                q2 = tuple(q[c] + newborn[c] - oldborn[c]
                           for c in range(s + 1))
                ans += solve(I2, q2)
            return ans

        value = solve(0, born(0))
        if audit and value != ideal_dp_count(PR, n):
            raise AssertionError(("pattern DP mismatch", rows, pattern_key, value, ideal_dp_count(PR, n)))
        total += value

    if audit:
        # Independently enumerate all raw choices u_b in U_b and retain the acyclic ones.
        raw = set()
        choices_by_b = []
        for b in relevant:
            choices_by_b.append([xpos[x] for x in xplus if U[b] >> x & 1])
        for ch in product(*choices_by_b):
            if add_pattern(rows, n, relevant, U, xplus, ch) is not None:
                raw.add(tuple(ch))
        if raw != generated_patterns:
            raise AssertionError(("pattern enumeration mismatch", rows, raw, generated_patterns))

    if stats is not None:
        pcount = len(generated_patterns)
        stats["posets"] = stats.get("posets", 0) + 1
        stats["patterns"] = stats.get("patterns", 0) + pcount
        stats["multi"] = stats.get("multi", 0) + (pcount >= 2)
        stats["max_patterns"] = max(stats.get("max_patterns", 0), pcount)
        stats["max_s"] = max(stats.get("max_s", 0), s)
        stats["shared_deadline"] = stats.get("shared_deadline", 0) + has_shared_deadline
    return total


def random_poset(n, rng, p=0.35):
    rows = [0] * n
    order = list(range(n))
    rng.shuffle(order)
    for i in range(n):
        for j in range(i + 1, n):
            if rng.random() < p:
                rows[order[i]] |= 1 << order[j]
    return closure(rows, n)


def fmt_stats(stats):
    return (f"multi={stats.get('multi', 0)}, "
            f"max_patterns={stats.get('max_patterns', 0)}, "
            f"max_s={stats.get('max_s', 0)}, "
            f"shared_deadline={stats.get('shared_deadline', 0)}")


def merge_stats(dst, src):
    for key in ("posets", "patterns", "multi", "shared_deadline"):
        dst[key] = dst.get(key, 0) + src.get(key, 0)
    for key in ("max_patterns", "max_s"):
        dst[key] = max(dst.get(key, 0), src.get(key, 0))


def exhaustive(max_n):
    checked = 0
    total_stats = {}
    start = time.time()
    for n in range(1, max_n + 1):
        count_n = 0
        stats = {}
        for P in naturally_labeled_posets(n):
            a = brute_force_count(P, n)
            b = ideal_dp_count(P, n)
            c = deadline_count(P, n, audit=True, stats=stats)
            if not (a == b == c):
                raise AssertionError((n, P, a, b, c))
            count_n += 1
            checked += 1
        merge_stats(total_stats, stats)
        print(f"n={n}: {count_n} naturally labeled posets: OK; {fmt_stats(stats)}", flush=True)
    print(f"exhaustive total: {checked} posets: OK; {fmt_stats(total_stats)} "
          f"({time.time()-start:.2f}s)", flush=True)


def exhaustive_crosscheck(n, audit=False):
    count_n = 0
    stats = {}
    start = time.time()
    for P in naturally_labeled_posets(n):
        a = ideal_dp_count(P, n)
        b = deadline_count(P, n, audit=audit, stats=stats)
        if a != b:
            raise AssertionError((n, P, a, b))
        count_n += 1
    tag = "with structural audit" if audit else "count cross-check"
    print(f"n={n}: {count_n} naturally labeled posets, {tag}: OK; {fmt_stats(stats)} "
          f"({time.time()-start:.2f}s)", flush=True)


def randomized(trials, seed, audit=True):
    rng = Random(seed)
    stats = {}
    start = time.time()
    for q in range(trials):
        n = rng.randint(7, 10)
        P = random_poset(n, rng, p=rng.uniform(0.15, 0.65))
        a = ideal_dp_count(P, n)
        b = deadline_count(P, n, audit=audit, stats=stats)
        if a != b:
            raise AssertionError((q, n, P, a, b))
    tag = "with structural audit" if audit else "count cross-check"
    print(f"random: {trials} posets on 7..10 vertices, {tag}: OK; seed={seed}; "
          f"{fmt_stats(stats)} ({time.time()-start:.2f}s)", flush=True)


def targeted_poset(rng, smax=3, nmax=16):
    """A large-antichain instance designed to exercise the new machinery."""
    while True:
        s = rng.randint(2, smax)
        a = rng.randint(s, s + 2)
        bsize = rng.randint(max(s + 2, 5), 8)
        n = a + bsize + s
        if n <= nmax:
            break
    xminus = list(range(a))
    B = list(range(a, a + bsize))
    xplus = list(range(a + bsize, n))
    rows = [0] * n
    Bmask = sum(1 << b for b in B)
    for x in xminus:
        rows[x] |= Bmask
    # Distinct guards certify Hall's condition for replacing B by X^+.
    for j, u in enumerate(xplus):
        rows[B[j]] |= 1 << u
    p = rng.uniform(0.25, 0.60)
    for b in B:
        for u in xplus:
            if rng.random() < p:
                rows[b] |= 1 << u
    P = closure(rows, n)
    if P is None:
        raise AssertionError("targeted generator created a cycle")
    return P


def targeted(trials, seed, smax=3, label="targeted"):
    rng = Random(seed)
    stats = {}
    start = time.time()
    for q in range(trials):
        P = targeted_poset(rng, smax)
        n = len(P)
        a = ideal_dp_count(P, n)
        b = deadline_count(P, n, audit=True, stats=stats)
        if a != b:
            raise AssertionError((q, n, P, a, b))
    print(f"{label}: {trials} large-antichain posets, structural audit: OK; seed={seed}; "
          f"{fmt_stats(stats)} ({time.time()-start:.2f}s)", flush=True)


def main():
    ap = argparse.ArgumentParser()
    ap.add_argument("--max-n", type=int, default=6)
    ap.add_argument("--cross-n", type=int, default=7)
    ap.add_argument("--cross-audit", action="store_true")
    ap.add_argument("--random", type=int, default=3000)
    ap.add_argument("--targeted", type=int, default=1500)
    ap.add_argument("--targeted-wide", type=int, default=250)
    ap.add_argument("--seed", type=int, default=20260815)
    args = ap.parse_args()
    exhaustive(args.max_n)
    if args.cross_n:
        exhaustive_crosscheck(args.cross_n, audit=args.cross_audit)
    randomized(args.random, args.seed, audit=True)
    targeted(args.targeted, args.seed + 1000)
    if args.targeted_wide:
        targeted(args.targeted_wide, args.seed + 2000, smax=5, label="targeted-wide")


if __name__ == "__main__":
    main()
\end{lstlisting}

\end{document}